\newtheorem{theorem}{Theorem}[section]
\newtheorem{lemma}[theorem]{Lemma}
\newtheorem{corollary}[theorem]{Corollary}
\newtheorem{fact}[theorem]{Fact}
\DeclareMathOperator{\poly}{poly}
\DeclareMathOperator{\polylog}{polylog}
\newcommand{\eqdef}{\stackrel{\text{\tiny\rm def}}{=}}
\newcommand{\SET}[1]{\{#1\}}
\newcommand{\MAX}[1]{\max\{#1\}}
\newcommand{\MIN}[1]{\min\{#1\}}
\newcommand{\eps}{\epsilon}
\newcommand{\calA}{{\mathcal A}}
\newcommand{\calI}{{\mathcal I}}
\newcommand{\calG}{{\mathcal G}}
\newcommand{\tO}{{\tilde O}}
\newsavebox{\algDescBox}
\savebox{\algDescBox}{\phantom{\bf Algorithm 5:}}
\newcommand{\algdesc}[1]{\newline \leavevmode\usebox{\algDescBox}\,\,#1}
\newcommand{\globalAlg}{\mbox{\tt GlobalPeeling}}
\newcommand{\RCAlg}{\mbox{\tt RoundCompression}}
\newcommand{\arbMIS}{\mbox{\tt ArboricityMIS}}
\newcommand{\MPCMatch}{\mbox{\tt MatchMPC}}
\newcommand{\AlgFewPhases}{\mbox{\tt LocalPeeling}}
\newcommand{\LOCAL}{{\rm LOCAL}}
\title{Round Compression for Parallel Graph Algorithms\\in Strongly Sublinear Space}
\author{Krzysztof Onak\vspace{3pt}\\\includegraphics[height=10pt]{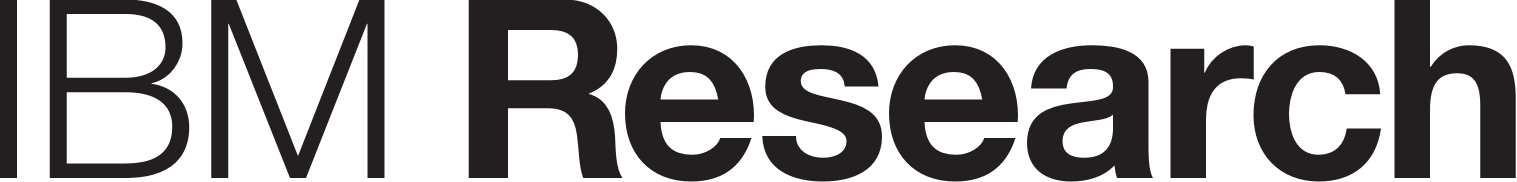}}
\date{July 2018}
\begin{document}

\maketitle

\begin{abstract}
The Massive Parallel Computation (MPC) model is a theoretical framework for popular parallel and distributed platforms such as MapReduce, Hadoop, or Spark. We consider the task of computing a large matching or small vertex cover in this model when the space per machine is $n^\delta$ for $\delta \in (0,1)$, where $n$ is the number of vertices in the input graph. A direct simulation of classic PRAM and distributed algorithms from the 1980s results in algorithms that require at least a logarithmic number of MPC rounds. We give the first algorithm that breaks this logarithmic barrier and runs in $\tO(\sqrt{\log n})$ rounds, as long as the total space is at least slightly superlinear in the number of vertices.

The result is obtained by repeatedly compressing several rounds of a natural peeling algorithm to a logarithmically smaller number of MPC rounds. Each time we show that it suffices to consider a low--degree subgraph, in which local neighborhoods can be explored with exponential speedup. Our techniques are relatively simple and can also be used to accelerate the simulation of distributed algorithms for bounded--degree graphs and finding a maximal independent set in bounded--arboricity graphs.
\end{abstract}

\section{Introduction}

The \emph{Massive Parallel Computation} (MPC) model is a neat framework introduced by Karloff, Suri, and Vassilvitskii~\cite{MPC} to describe efficient computation in MapReduce~\cite{dg04,dg08} and applies to other modern massive distributed computation platforms that are widely successful in practice such as Hadoop~\cite{White:2012}, Dryad~\cite{Isard:2007}, or Spark~\cite{ZahariaCFSS10}. It is often possible to simulate classic PRAM or distributed algorithms from the LOCAL model directly in MPC, using roughly the same number of parallel computation rounds (assuming that they are not too extravagant or inefficient). See the works by Karloff et al.~\cite{MPC} and Goodrich, Sitchinava, and Zhang~\cite{GSZ} for a more detailed discussion of this topic. It is, however, an appealing challenge to design algorithms that solve a given combinatorial problem in much fewer MPC rounds by taking advantage of the different design of the model, which allows for arbitrary local computation on a small fraction of data and reshuffling information globally in each computation round. 

An important parameter of the MPC model is the amount $S$ of space assigned to a single machine. Throughout this paper, which focuses on graph algorithms, we assume that $n$ is the number of vertices in the input graph. Previous research has mostly focused on the regime of $S = n^{1+\Theta(1)}$~\cite{MPC,LattanziMSV11,AhnG15}, or more recently, $S=\tilde\Theta(n)$~\cite{round_compression,Assadi_VC,loglog_1,loglog_2}. These lines of work have resulted in algorithms that run in $O(1)$ or $O(\log \log n)$ MPC rounds, which is significantly faster than $\Omega(\log n)$ required by the best algorithms in the aforementioned classic models of computation.

Only very recently a number of works have considered the space regime $S=O(n^\delta)$ for a fixed $\delta \in (0,1)$ and gave algorithms with strongly sublogarithmic numbers of rounds. To the best of our knowledge, all of them require an input from a restricted class of graphs. For instance, Brandt, Fischer, and Uitto~\cite{BFU} give an algorithm for computing a maximal independent set in trees. Andoni, Stein, Song, Wang, and Zhong~\cite{connect_1} and Assadi, Sun, and Weinstein~\cite{connect_2} give connectivity algorithms for graphs that have a limited diameter or are well--connected. In this work, we give an $\tilde O(\sqrt{\log n})$--round MPC algorithm for approximating maximum matching and vertex cover in \emph{arbitrary graphs}. Until now, it was only known how to achieve a $2$--approximation in $\Theta(\log n)$ rounds by simulating classic maximal independent set and maximal matching algorithms of Luby~\cite{Luby86}, Alon, Babai, and Itai~\cite{AlonBI86}, and Israeli and Itai~\cite{II86}.

Let us briefly mention the importance of the $S=O(n^{\delta})$ regime. Many big graphs in practice, such as social networks or networks of financial transactions, are sparse with the number of edges linear in the number of vertices. Given their size, they may not fit onto one or a small number of machines. Therefore, it may be very useful to distribute both data and processing across a larger cluster of machines. We note that the recent line of work on the near--linear regime~\cite{round_compression,Assadi_VC,loglog_1,loglog_2} allowed for a slightly sublinear amount of space per machine such as $S = n / \polylog(n)$. The $S=O(n^{\delta})$ regime enables, however, a significantly wider range of potential applications.

\paragraph{Recent developments.} Concurrently to this work, a few works have considered MPC algorithms for graph problems in the same space regime~\cite{arboricity-1,arboricity-2,GU}. In particular, the paper by Ghaffari and Uitto~\cite{GU} considers the same problems and uses similar techniques. We are still investigating the full relationship of these works to ours.

\subsection{Massive Parallel Computation}

In the Massive Parallel Computation model~\cite{MPC}, there are $M$ machines and each of them has $S$ words of space. Focusing on the scenario considered in this paper, the input is a set of $m$ edges and initially, each machine receives a fair share of roughly $m/M$ of them. The computation proceeds in rounds. During each round, machines first process their local data without communicating between each other. Then machines create and send messages addressed to each other. Each message is sent to only a single machine specified by the sender. An important constraint is that all messages sent and received by a single machine have to fit into the machine's local space of size $S$. The messages can be processed by recipients in the next round. At the end of the computation, machines can collectively output the solution. Each machine's output has to fit into that machine's local space and therefore each machine can output at most $S$ words.

In order for the computation to be possible, the total space $M \cdot S$ has to be at least linear in the input size---which is $m$ in our case---and preferably not significantly larger. In this work, we generally allow the total space to be of order $n^{1+o(1)} + O(m)$. 
The original definition allowed for nearly--quadratic total space---$N^{1-\eps}$ machines with $N^{1-\eps}$ space each, where $N$ is the input size and $\eps$ is a small fixed constant---but later works~\cite{BeameKS17,ANOY} suggested focusing on near--linear total space, which may be significantly more practical in the big data setting.

We also mention that if $S \ge n^\delta$ for a fixed constant $\delta \in (0,1)$, then a number very useful primitives can be simulated in $O(1)$ MPC rounds such as sorting, prefix--sum computation, etc.~\cite{GSZ}. We (and most other works in the area) heavily rely on them.

\subsection{Maximum Matching and Minimum Vertex Cover}

A set of edges that share no vertices is called a \emph{matching}. In the \emph{maximum matching} problem, the goal is to find a matching of maximum cardinality. If $M_\star$ is a maximum matching, then a matching $M$ is a $p$--approximation, for $p \ge 1$, if $|M_\star|\le p |M|$.
A set $C$ of vertices is a \emph{vertex cover} if for every edge $\{u,v\}$ in the graph at least one of $u$ and $v$ belongs to $C$. In the \emph{(minimum) vertex cover} problem, the goal is to find a vertex cover of minimum cardinality. Let $C_\star$ be a minimum vertex cover. We say that a vertex cover $C$ is a $p$--approximation, where $p \ge 1$, if $|C| \le p|C_\star|$.

If $S = n^{1+\Omega(1)}$, then it is known that one can compute a maximal matching in $O(1)$ MPC rounds~\cite{LattanziMSV11} and the approximation factor can be improved to $(1+\eps)$ in $O(1/\eps)$ rounds~\cite{AhnG15}. If $S = O(n)$, the an $O(2+\eps)$--approximation to maximum matching and vertex cover can be computed in $O(\log(1/\eps) \cdot \log \log n)$ rounds~\cite{round_compression,loglog_1,loglog_2}. For maximum matching, the approximation factor can be improved to $(1+\eps)$ at the cost of an additional factor of $(1/\eps)^{O(1/\eps)}$ in the number of rounds.

The simplest version of our result can be stated as follows.

\begin{theorem}\label{thm:MM_simple}Let $\delta \in (0,1)$ be a fixed constant. There is an MPC algorithm that for an $n$--vertex graph, computes a constant--factor approximation to both maximum matching and minimum vertex cover, runs in $\tilde O(\sqrt{\log n})$ rounds, and uses $O(n^\delta)$ space per machine and $O(m) + n^{1+o(1)}$ total space.
\end{theorem}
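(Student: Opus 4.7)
The plan is to combine a natural $O(\log n)$-round peeling algorithm with a round-compression scheme that simulates many consecutive peeling phases in a single MPC round by exploiting locality on subgraphs of bounded degree.

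First, I would design a peeling algorithm \globalAlg{} that, viewed as a \LOCAL{} algorithm, computes a constant-factor approximation to maximum matching (and hence, via matched endpoints, a constant-factor approximation to minimum vertex cover) in $O(\log n)$ phases. A concrete candidate is to repeatedly identify a ``heavy layer'' of near-maximum-degree vertices, compute a matching inside that layer in $O(1)$ \LOCAL{} rounds, add it to the output, and delete the matched endpoints. A standard potential argument shows that either a constant fraction of the remaining edges is removed or the maximum degree drops by a constant factor per phase, so $O(\log n)$ phases suffice; the endpoints of the produced matching then form a constant-factor vertex cover.

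Next I would prove the central compression lemma \RCAlg{}: on a graph of maximum degree at most $\Delta$ with $\Delta^{c t} \le n^\delta$ for a suitable constant $c$, $t$ consecutive phases of \globalAlg{} can be simulated in $O(1)$ MPC rounds. The reason is purely local: the state of a vertex after $t$ \LOCAL{} phases is determined by its $O(t)$-hop neighborhood, which under the degree bound contains at most $\Delta^{O(t)} \le n^\delta$ vertices. Hence every such neighborhood fits into a single machine, and the $O(1)$-round sorting and routing primitives available when $S \ge n^\delta$ let each vertex gather its neighborhood and then finish the $t$ phases internally without further communication.

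To assemble the final algorithm, I would pick $t = \Theta(\sqrt{\log n})$, for which the required degree bound is $\Delta \le 2^{\delta \sqrt{\log n}/c}$. Before each compressed block I would invoke a short subroutine \AlgFewPhases{} that eliminates vertices currently exceeding this degree threshold: such vertices can be matched in $O(1)$ MPC rounds via a greedy assignment (they have many available partners), or safely added to the vertex cover, since a constant fraction of them must lie in any near-optimal cover. After each cleanup the compression lemma applies, one block advances the peeling by $\sqrt{\log n}$ phases in $\tO(1)$ MPC rounds, and chaining $O(\sqrt{\log n})$ such blocks completes the $O(\log n)$-phase execution of \globalAlg{} in $\tO(\sqrt{\log n})$ MPC rounds. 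The space budget is dominated by storing $n^\delta$-sized neighborhoods plus the $m$ input edges, yielding $O(n^\delta)$ per machine and $O(m) + n^{1+o(1)}$ total.

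The hard part is the interaction between the degree-reduction step and the compression lemma. A vertex may itself have low degree yet sit within $t$ hops of a very high-degree vertex, causing its $t$-hop neighborhood to exceed $n^\delta$. The resolution is to commit all high-degree vertices to the output \emph{before} compressing, so that the effective graph handed to \RCAlg{} genuinely satisfies the global degree bound; making this argument tight, and showing that the error it introduces accumulates only to a constant factor in the final approximation ratio, is where the bulk of the technical work will lie.
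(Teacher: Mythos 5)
Your overall architecture---repeatedly compressing blocks of a logarithmic--round peeling algorithm by gathering local neighborhoods on a low--degree graph---matches the paper's. But the mechanism you propose for obtaining a low--degree graph before each compressed block is wrong, and it is exactly the place where the paper's central idea lives.

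You propose to ``cleanup'' high--degree vertices by either greedily matching them in $O(1)$ MPC rounds or committing them to the vertex cover, asserting that ``a constant fraction of them must lie in any near--optimal cover.'' Neither option survives scrutiny. Committing all high--degree vertices to the cover is unsound: take a bipartite graph with $\sqrt{n}$ vertices on one side, each adjacent to all $n-\sqrt{n}$ vertices on the other side. Every vertex has degree at least $\sqrt{n} \gg 2^{\sqrt{\log n}}$, so your cleanup would put all $n$ vertices into the cover, whereas the optimal vertex cover has size $\sqrt{n}$---a polynomial blowup, not a constant factor. And matching ``greedily'' among high--degree vertices in $O(1)$ MPC rounds with $n^\delta$ space per machine is not a primitive you have available; resolving the $O(\Delta^2)$--fold contention among candidate pairs is essentially the maximal--matching problem you are trying to accelerate, and it is precisely what costs $O(\log n)$ rounds in this regime. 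The paper sidesteps both problems with a different device: it never physically removes high--degree vertices before compressing. Instead it randomly \emph{subsamples} edges at rate roughly $2^{k}\log n/\Delta$ so that the sampled graph has maximum degree $2^{O(k)}\log n$ with high probability, yet a vertex whose true degree is near the current threshold $\Delta$ still has $\Theta(\log n)$ sampled edges, enough to be recognized as heavy by concentration. Peeling is then simulated on the \emph{sampled} graph (with random edge labels supplying the ``friend'' choices deterministically), and all heavy vertices and their friends are removed from the real graph. This sampling is what yields the degree bound that the round--compression lemma needs, and fresh samples per simulated phase give independence for the union bound. Without it, your scheme has no working degree--reduction step.

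Two smaller issues. First, gathering a $t$--hop neighborhood is not an $O(1)$--round operation: it requires $O(\log t)$ MPC rounds of radius doubling (the paper's Lemma~\ref{lem:rc} gives $O(\log t)$, not $O(1)$), which still yields $\tilde O(\sqrt{\log n})$ overall but your stated bound per block is wrong. Second, your parameter choice $t = \Theta(\sqrt{\log n})$ with $\Delta^{O(t)} \le n^\delta$ makes each stored neighborhood $n^{\Theta(\delta)}$, so the total space is $n^{1+\Theta(\delta)}$, not $n^{1+o(1)}$; one must take $t \approx \sqrt{\log n / \log\log n}$ (as the paper does) to keep the per--vertex blowup at $n^{o(1)}$.
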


See Theorem~\ref{thm:MM_VC} for the full statement that allows for balancing the number of rounds and the magnitude of the extra factor of $n^{o(1)}$ in the total space bound.
We also discuss further ramifications of the result in Section~\ref{sec:mm_extensions}, including obtaining better approximation factors and the weighted matching problem.

\subsection{Maximal Independent Set in Bounded--Arboricity Graphs}

A subset $U$ of vertices of a graph is an \emph{independent set} if for any edge $\{u,v\}$ at most one of $u$ and $v$ belongs to $U$. Additionally, an independent set is a \emph{maximal independent set} if it cannot be extended, i.e., adding any vertex would result in a set that is not independent. In the maximal independent set problem, the goal is to compute any maximal independent set in the input graph.

If $S=n^{1+\Theta(1)}$, then a maximal independent set can be computed in $O(1)$ MPC rounds~\cite{HarveyLL18}. It is known that if $S = O(n)$, then a maximal independent set in an arbitrary graph can be found in $O(\log \log n)$ MPC rounds \cite{loglog_2,MIS_in_CQ}. For $S = O(n^\delta)$, the best known algorithm remains the direct simulation of the classic algorithms of Luby~\cite{Luby86} or Alon, Babai, and Itai~\cite{AlonBI86} in $O(\log n)$ rounds. Recently, Brandt, Fischer, and Uitto~\cite{BFU} showed an algorithm that in this memory regime finds a maximal independent set in trees in $O((\log \log n)^3)$ rounds. We consider a broader class of bounded--arboricity graphs (i.e., graphs that can be decomposed into a small number of forests; arboricity $\alpha$ means that the graph decomposes into at most $\alpha$ forests), but our number of rounds is significantly higher. The simplest version of our result can be stated as follows.

\begin{theorem}\label{thm:arbMIS_simple}
Let $\calG$ be a class of graphs of arboricity $O(1)$ and let $\delta \in (0,1)$ be a fixed constant. There is an MPC algorithm that for an $n$--vertex graph from $\calG$, computes a maximal independent set with probability $1 - O\left(\frac{\log n}{n}\right)$, runs in $\tO(\sqrt{\log n})$ rounds, and uses $O(n^\delta)$ space per machine and $n^{1+o(1)}$ total space.
\end{theorem}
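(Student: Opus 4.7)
The plan is to follow the round-compression paradigm used for Theorem~\ref{thm:MM_simple}: first design a natural $O(\log n)$-round LOCAL peeling algorithm for MIS that exploits bounded arboricity, and then show that $\Theta(\sqrt{\log n})$ of its rounds can be packed into a single MPC round. In each LOCAL round, I would restrict attention to the set $L$ of currently-active vertices whose degree in the residual graph is at most $4\alpha$; because arboricity $\alpha$ forces average degree at most $2\alpha$, Markov's inequality guarantees $|L|\ge |V|/2$ throughout the execution. I then perform one Luby-style step on $G[L]$: every $v\in L$ marks itself with probability $\Theta(1/\alpha)$, joins the MIS unless a marked neighbor in $L$ also marks, and is deleted together with all of its neighbors in the residual graph. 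Since $G[L]$ has maximum degree $O(\alpha)=O(1)$, a standard Chernoff-plus-martingale analysis shows that a constant fraction of the residual vertex set is removed per round, so $O(\log n)$ rounds suffice with failure probability $O(\log n/n)$.

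The key structural observation driving the compression is that every decision made during a round happens inside $L$, and $L$ has maximum degree at most $4\alpha=O(1)$ at every snapshot. Unrolling $r$ consecutive rounds therefore depends only on the depth-$r$ exploration of a vertex in the auxiliary graph $H$ whose edges are pairs of endpoints that are simultaneously low-degree at the relevant time step; this exploration visits at most $(4\alpha+1)^r=2^{O(r)}$ vertices. Setting $r=\Theta(\sqrt{\log n})$ gives balls of size $2^{O(\sqrt{\log n})}=n^{o(1)}$, easily fitting in $n^\delta$ machine space and $n^{1+o(1)}$ total space. Using the standard MPC primitives---sorting, prefix sums, and iterated doubling of collected neighborhoods---each active vertex can assemble its depth-$r$ ball in $O(1)$ MPC rounds and then simulate $r$ LOCAL rounds locally using random bits sampled once at the outset. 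With $O(\log n/r)=O(\sqrt{\log n})$ super-rounds and $\tilde O(1)$ MPC rounds per super-round (including a global recomputation of residual degrees and active sets), the total is $\tilde O(\sqrt{\log n})$, and a union bound over the $O(\log n)$ internal LOCAL rounds preserves the probability guarantee $1-O(\log n/n)$.

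The main technical obstacle lies in the unrolling step: I must ensure that a vertex which becomes only transiently high-degree inside a super-round does not silently carry information across the interface with $L$ in a way that invalidates the ``only low-degree vertices interact'' picture. The cleanest remedy I anticipate is to have every vertex pause for one LOCAL round after its activity status changes, so that the depth-$r$ exploration in $H$ truly captures all causal dependencies inside a super-round; verifying that this pause does not damage the progress measure---so that the halving guarantee and the $O(\log n)$ phase count survive---is the core lemma behind the compression. A secondary issue is distributing the $(4\alpha+1)^r$-sized balls without overloading any single machine, which is handled by the standard MPC trick of spreading heavy vertices across multiple machines via sorting.
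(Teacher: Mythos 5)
Your proposal differs substantially from the paper's route, and it contains a gap that the paper's structure is precisely designed to avoid.

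The paper's algorithm (\arbMIS) fixes, at the start of each \emph{outer} iteration, the set $U'$ of vertices whose current residual degree is at most $\Delta=2\alpha\gamma$ with $\gamma$ chosen super-constant, then runs Ghaffari's $O(\log\Delta)$-round LOCAL MIS algorithm entirely inside the induced subgraph $G[U']$, which has bounded maximum degree throughout that inner run. Round compression (\RCAlg{}) is applied only to this inner run, so the ``low-degree graph'' is a static object whose degree bound is known in advance; no vertex changes its membership in $U'$ mid-simulation. Global degree recomputation happens only at the outer-iteration boundary, and since a $(1-1/\gamma)$-fraction of the surviving vertices are low-degree by the arboricity/Markov argument, all of $U'$ is removed and $|U|$ shrinks by a factor $\gamma$ each time, giving $O(\log n/\log\gamma)$ outer iterations of $O(\log\log(\alpha\gamma))$ MPC rounds each.

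Your scheme instead recomputes the active set $L_t$ \emph{every LOCAL round}, with $L_t$ defined by current residual degree $\le 4\alpha$. The problem: to decide whether $v\in L_t$ for a $t$ strictly inside a super-round, you must know $v$'s residual degree at time $t$, which depends on which of $v$'s neighbors were removed in rounds $<t$. If $v$ is a high-degree vertex (say the center of a star, which has arboricity $1$), it can have $\Omega(n)$ neighbors, each a low-degree vertex that may independently have been removed. Determining the trajectory of $v$'s degree is therefore not a bounded-degree local exploration in any graph $H$ you can precompute; the $(4\alpha+1)^r$ ball bound only controls exploration among vertices already known to be in $L$, and membership in $L$ is exactly what you are trying to determine. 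Your last paragraph flags this as the ``core lemma,'' but the proposed fix---a one-round pause after an activity-status change---does not resolve it: the pause may reduce the \emph{rate} at which vertices cross the low/high interface, but it does not make a vertex's activity status computable from a small-radius, small-degree ball, because the status still depends on the aggregate behavior of arbitrarily many removed neighbors. Without a correct resolution of this, the neighborhood-collection step in Lemma~\ref{lem:rc} cannot be invoked: its space bound $s_\star=d^t(\cdots)$ requires a fixed degree bound $d$ on the graph the LOCAL algorithm actually runs on, and in your scheme no such fixed graph exists within a super-round. Adopting the paper's structure---freeze the low-degree set, simulate an entire MIS subroutine on the induced bounded-degree subgraph, then recompute globally---is the natural repair; that is what makes $s_\star$ well-defined and the compression valid.
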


See Theorem~\ref{thm:arbMIS} for the full statement that allows for balancing the number of rounds and the magnitude of the extra factor of $n^{o(1)}$ in the total space bound.

\subsection{Our Techniques}

Our techniques are relatively simple. For both main problems that we consider in this paper, we repeatedly extract a low--degree subgraph and solve a problem directly on it by exponentially accelerating the number of rounds that it needs. This comes at the cost of slightly increasing the total space by a factor of $n^{o(1)}$ beyond $O(n)$, because we double the radius of the neighborhood known to each vertex in each round. We note that recent works on connectivity in strongly sublinear space~\cite{connect_1,connect_2} also consider the problem of growing local neighborhoods. They have use more sophisticated methods in order to deal with unbounded vertex degrees and we use a more brute--force approach in which we completely collect local neighborhoods up to a specific radius.

For maximum matching and vertex cover, we extract a low--degree subgraph that corresponds to a superconstant number of iterations (out of a total $O(\log n)$) of a distributed algorithm of Onak and Rubinfeld~\cite{OR}. A similar approach has been used in other memory regimes~\cite{AssadiK17,round_compression,Assadi_VC}.

For the maximal independent set problem on bounded--arboricity graphs, we decrease the total number of vertices each time by extracting the graph on low--degree vertices and simulating the algorithm of Ghaffari~\cite{G-MIS}.

We also note that the idea of locally simulating a distributed algorithms in other models of computation was used by Parnas and Ron~\cite{PR} in order to provide sublinear--time estimation algorithms.

\subsection{Comparison to Connectivity} Our work exhibits an interesting relationship between the complexity of approximating the maximum matching (or vertex cover) and computing connectivity. If the space per machine is $\tilde O(n)$, we know that one can compute connected components in $O(1)$ MPC rounds~\cite{connect_lin_1,connect_lin_2}, while the best algorithms for computing the exact matching require $O(\log \log n)$ MPC rounds~\cite{loglog_1,loglog_2}. More specifically, approximating maximum matching seems more difficult in this regime.

On the other hand, if the space per machine becomes strongly sublinear in $n$, i.e., at most $n^{\delta}$ for $\delta \in (0,1)$, then our work shows that a good approximation to maximum matching can be computed in $\tilde O(\sqrt{\log n})$ rounds. At the same time, it seems unlikely that $(\log n)^{1-\Omega(1)}$ rounds suffice to even determine the connectivity of the input graph
(see, for instance, the recent works on connectivity in this space regime for a more--in--depth discussion~\cite{connect_1,connect_2}). Interestingly, in this regime, connectivity becomes a seemingly more difficult problem.

\section{Round Compression for Low--Degree Graphs}\label{sec:rc}

In this section, we introduce a simple technique for simulating a small number of rounds of a~distributed algorithm in MPC. The main idea is that for a deterministic distributed algorithm $\calA$, the output of each vertex is a function of solely the neighborhood of radius equal to the number of simulated rounds. This neighborhood can be collected in a number of MPC rounds that is logarithmic in the number of simulated rounds. We achieve this result by doubling the radius of the neighborhood known to each vertex in a constant number of rounds.

\begin{lemma}\label{lem:rc}
Let $\delta \in (0,1)$ be a fixed constant. Let $l_V$ and $l_E$ be the length of labels assigned to vertices and edges, respectively, expressed in words. Let $\calA$ be a deterministic distributed algorithm in the \LOCAL{} model that runs in $t \ge 2$ rounds on a graph of degree bounded by $d \ge 2$ and does not use more space than $s_\calA$ at any time to compute the output at each vertex.
Let $s_\star \eqdef d^{t}\left(l_V  + d(1+l_E) + s_\calA\right)$.

There is an MPC algorithm that computes the output of $\calA$ at each vertex, runs in $O(\log t)$ rounds, and uses $O(\MAX{n^\delta,s_\star})$ space per machine and $O(ns_\star)$ total space.
\end{lemma}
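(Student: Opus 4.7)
The plan is to use a \emph{radius-doubling} strategy. For each vertex $v$ we maintain an explicit representation $N_r(v)$ of its radius-$r$ view, i.e.\ the subgraph induced on the ball $B_r(v)$ together with all incident vertex and edge labels. Starting from $r=1$, we double $r$ in every phase consisting of $O(1)$ MPC rounds, and stop after $\lceil \log_2 t\rceil$ phases at which point $r \ge t$ and every $v$ holds $N_t(v)$ (the last phase is capped at $r=t$ if doubling would overshoot). Since $\calA$ is a deterministic \LOCAL{} algorithm of depth $t$, the output of $\calA$ at $v$ is a function of $N_t(v)$ alone, so $v$ locally simulates $\calA$ at every vertex of its ball in parallel; this requires $d^t \cdot s_\calA$ workspace, which is already absorbed into the definition of $s_\star$.

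Phase~$0$ builds $N_1(v)$ for every $v$ by sorting edges by endpoint and grouping, which is a standard $O(1)$-round MPC primitive once $S\ge n^\delta$. Phase $i\ge 1$ converts $\{N_r(\cdot)\}$ into $\{N_{2r}(\cdot)\}$ using the identity $B_{2r}(v)=\bigcup_{u\in B_r(v)} B_r(u)$, implemented as a semi-join: for each vertex $v$ and each $u$ listed in $N_r(v)$ emit a request record keyed by $u$; each owner $u$ emits its data record $(u, N_r(u))$; sort by $u$, pair every request with the corresponding data, emit a message $(v, N_r(u))$, then sort by $v$ and let $v$ reassemble $N_{2r}(v)$ by merging the received copies and deduplicating shared vertices and edges. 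All of these steps use only sorting and prefix-sum, which run in $O(1)$ rounds when $S\ge n^\delta$, so the total round count is $O(\log t)$.

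For space, the bounded-degree assumption gives $|B_r(v)|\le d^r$, hence $|N_r(v)| = O\!\left(d^r(l_V+d(1+l_E))\right)$; adding the simulation workspace yields $|N_t(v)| + d^t s_\calA \le s_\star$, so every machine fits within $O(\MAX{n^\delta,s_\star})$ words and the aggregate storage is $O(n s_\star)$. The peak intermediate traffic during a doubling from $r$ to $2r$ has total size $O\!\left(n\, d^{2r}(l_V+d(1+l_E))\right)$, which at $2r\le t$ is within $O(n s_\star)$; capping the final phase at radius $t$ keeps the bound throughout.

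The main obstacle is routing: a single $N_r(u)$ may itself be as large as $s_\star$ and may have up to $d^r$ requesters $v$, so neither $u$ nor any one machine can materialize $u$'s entire broadcast locally. This is exactly why the join is performed at the granularity of individual (request, data) pairs through the MPC sort primitive: the $d^r$ copies of $N_r(u)$ are produced and delivered by many machines in parallel, and no machine ever holds more than its $S$-word share of the traffic. Once the join is framed this way, the only remaining work at each $v$ is local deduplication during the merge, which fits comfortably in its $s_\star$ budget.
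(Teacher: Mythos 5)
Your proof follows the paper's approach essentially unchanged: collect $N_r(v)$ for growing radius $r$ by roughly doubling $r$ each phase, bound the per-vertex message volume by $O(s_\star)$ via the degree bound $d$, and once $N_t(v)$ is held at each vertex, simulate $\calA$ locally. The paper's Algorithm~\ref{alg:rc} uses the update $r \mapsto r + \min\{r,\,t-r-1\} + 1$ and requests $N_{r'}(w)$ only from the frontier vertices $w$ at distance exactly $r+1$, whereas you request $N_r(u)$ from every $u$ in the ball; these are cosmetic variants of the same scheme. One small point of imprecision: in your final phase, when $r < t < 2r$, you must truncate each outgoing $N_r(u)$ to $N_{t-r}(u)$ before replicating it (shipping full $N_r(u)$ to all $\Theta(d^r)$ requesters would give per-vertex traffic $\Theta(d^{2r})$, which can exceed $d^t$); your remark about ``capping the final phase at radius $t$'' gestures at this, and the paper's choice $r' = \min\{r, t-r-1\}$ builds the truncation in explicitly. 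Aside from that bookkeeping, the two arguments coincide.
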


\begin{proof}
For any vertex $v$ in input graph $G = (V,E)$, let $N_i(v)$ denote the subgraph of $G$ consisting of all vertices at distance at most $i$ from $v$ and all their incident edges, including all associated labels. Note that this definition is slightly unusual as the subgraph may contain an edge but not one of its endpoints. For instance, if $v$ has a neighbor $u$, which has a neighbor $w$ that is not a neighbor of $u$, then $N_i(v)$ contains $u$ and the edge $(u,w)$, both with their labels, but it does not contain $w$ and its label. The reason for this definition is that $N_i(v)$ captures exactly the information on which the output at $v$ can depend for any deterministic distributed algorithm that runs in $i$ round. 

Suppose now that the maximum degree in $G$ is bounded by $d \ge 2$. Consider any vertex $v \in V$. Let $n_i$ be the maximum number of vertices at distance at most $i$, for any $i \ge 0$. Since $d \ge 2$, we have $n_i = O(d^i)$. Therefore, the number of edges in $N_i(v)$ is at most $dn_i = O(d^{i+1})$. It is easy to construct a description of $N_i(v)$ of size at most $s_i = O(d^i(1+l_V) + d^{i+1}(1+l_V)) = O(d^i(l_V+d(1+l_E)))$ words.

\begin{algorithm}[ht]
  \caption{$\RCAlg(G,\calA,t)$\algdesc{Round compression for low--degree graphs}}\label{alg:rc}
  \KwIn{\\
  \qquad$\bullet$ graph $G = (V,E)$ with degree of all vertices at most $d$\\
  \qquad$\bullet$ deterministic distributed algorithm that runs in $t$ rounds\\
  \qquad$\bullet$ the number of rounds $t$ for which $\calA$ will be simulated}
  \KwOut{the result of running $\mathcal A$ on $G$}

\BlankLine

  $r \leftarrow 0$
  
  Distribute vertices $v \in V$ evenly among the machines and send $N_0(v)$ to each of them
  
  \While{$r < t$}{
  
      $r' \leftarrow \min\{r,t-r-1\}$
  
      \ForEach{\rm $v,w \in V$ such that distance of $w$ from $v$ is $r+1$}{Send $N_{r'}(w)$ to the machine assigned to $v$}

      \ForEach{$v \in V$}{Combine $N_r(v)$ with all received $N_{r'}(w)$ to obtain $N_{r+r'+1}(v)$}
      
      $r \leftarrow r + r'+ 1$
  }

  \ForEach{$v \in V$}{Simulate $\calA$ on $N_t(v)$ to compute the output of $\calA$ at $v$}
  
\end{algorithm}

We present the MPC algorithm for computing the output of $\calA$ as Algorithm~\ref{alg:rc}. This algorithm distributes all $n$ vertices evenly between the machines. This can be implemented by sorting unique vertex identifiers in $O(1)$ MPC rounds. Then, in $O(\log t)$ rounds, we collect $N_t(v)$ for each vertex $v \in V$. For each increasing radius parameter $r$, the total length in words of all messages that are sent and received for each vertex $v$ is bounded by $d^{r+1} \cdot s_{r'} = d^{r+1} \cdot O\left(d^{r'}(l_V+d(1+l_E))\right) = O\left(d^{t}(l_V+d(1+l_E))\right) = O(s_\star)$. Hence $O(\MAX{s_\star,n^\delta})$ space per machine and $O(ns_\star)$ total space suffices to conduct this step. Then for each $v\in V$, the machine assigned to $v$ simulates all the $t$ rounds of $\calA$ on $N_t(v)$ to compute the output $\calA$ would compute at $v$. Since the number of vertices in each $N_t(v)$ is bounded by $n_i$, the additional space needed for simulating $\calA$ is also bounded by $O(n_t s_\calA) = O(d^t s_\calA) = O(s_\star)$. Hence there is enough space to run this simulation as well.
\end{proof}

\paragraph{Sample application.}
{\AA}strand et al.~\cite{AstrandFPRSU09} give a distributed algorithm in the LOCAL model that computes a 2--approximation to vertex cover in graphs of degree at most $\Delta$. It is deterministic and runs in $(\Delta+1)^2$ rounds. Furthermore, it uses at most $\poly(\Delta)$ space per vertex throughout the execution. By applying Lemma~\ref{lem:rc}, we obtain an MPC algorithm that for graphs of degree bounded by $\Delta$, computes a 2--approximation to vertex cover in $O(\log \Delta)$ rounds, and uses $\MAX{O(n^\delta),2^{\tO(\Delta^2)}}$ space per machine and $n\cdot 2^{\tO(\Delta^2)}$ total space, where $\delta > 0$ is a fixed constant.

If the extra factor of $2^{\tO(\Delta^2)}$ is too high, we can reduce the total space usage at the cost of a larger number of rounds. For instance, we can partition the execution of the algorithm into $\Delta+1$ phases, each consisting of $\Delta+1$ rounds. By applying round compression repeatedly to each of the phases (and saving the intermediate state of the algorithm at each vertex), we obtain an MPC algorithm that computes the same output in $O(\Delta \log \Delta)$ rounds, and uses $\MAX{O(n^\delta),2^{\tO(\Delta)}}$ space per machine and $n\cdot 2^{\tO(\Delta)}$ total space.

\paragraph{Note on randomness.}
Note that we assume that $\calA$ in Lemma~\ref{lem:rc} is deterministic. This is necessary so that the behavior of the algorithm on every node is fixed and any machine can compute the outcome of computation at any vertex. As we see later, this limitation can, however, be circumvented. This is achieved by assigning the randomness needed by the algorithm to either vertices or edges in the form of labels. A randomized algorithm is then transformed into a deterministic algorithm that depends on the labels (and uses them for its source of ``randomness'').

\section{Algorithm for Maximum Matching and Vertex Cover}

\subsection{Review of the Peeling Algorithm}

In this section, we review the peeling algorithm for maximum matching and 
vertex cover that was introduced by Onak and Rubinfeld~\cite{OR}. It was 
inspired by the $O(\log n)$--approximation algorithm for vertex cover of Parnas and Ron~\cite{PR}.
Both these algorithms are (or can be seen as) distributed algorithms in the LOCAL model.

\begin{algorithm}[ht]
  \caption{$\globalAlg(G,d)$\algdesc{A peeling algorithm that we want to simulate}}\label{alg:peeling_central}
  \KwIn{graph $G = (V,E)$ with degree of all vertices at most $d$}
  \KwOut{matching and vertex cover in $G$}

\BlankLine
  
  $\Delta \leftarrow d$, $U \leftarrow V$, $M \leftarrow \emptyset$, $C \leftarrow \emptyset$
  
  \While{$\Delta \ge 1$\label{line:peeling:main_loop}}{
  
      \tcc{Invariant:\ maximum degree in $G[U]$ at most $\Delta$}

      $\Delta \leftarrow \Delta/2$

      Let $H \subseteq U$ be the set of vertices of degree at least $\Delta$ in $G[U]$\label{line:peeling:heavy}
      
      \tcc{We call vertices in $H$ heavy\hbox{.}}

      \ForEach{$v\in H$\label{line:peeling:select_f}}{$f(v) \leftarrow \mbox{neighbor of $v$ in $G[U]$ selected uniformly independently at random}$ 
      
      \tcc{We call $f(v)$ a friend of $v$\mbox{.}}}

      $F \leftarrow \{f(v): v \in H\}$\label{line:peeling:friends}
      
      Color each vertex in $H \cup F$ either blue or red independently at random\label{line:peeling:color}
      
      $M \leftarrow M \cup \left\{(v,f(v)) \in H \times F : \mbox{$v$ is blue} \land \mbox{$f(v)$ is red} \land \forall_{\mbox{\scriptsize blue\ }w \in H\setminus\{v\}}\mbox{$f(w) \ne f(v)$}\right\}$\label{line:peeling:match}
      
      $C \leftarrow C \cup H \cup F$\label{line:peeling:vc}
      
      $U \leftarrow U \setminus \left(H \cup F\right)$\label{line:peeling:reduce_graph}
  
  }
  
  \Return{$(M,C)$}
\end{algorithm}

We present the pseudocode of the algorithm as Algorithm~\ref{alg:peeling_central}. We now briefly discuss how it works and present the intuition behind it. 
The algorithm takes as input a graph $G$ and upper bound $d$ on the maximum vertex degree in $G$. It proceeds in a number of phases that is logarithmic in $d$. Throughout the execution, it maintains a threshold $\Delta$, which initially equals $d$. In the process, the algorithm keeps removing vertices from the original graph---$G[U]$ denotes the current graph---while  adding edges to a matching and vertices to a vertex cover. At the beginning of each phase, the threshold is halved, i.e., after consecutive phases is becomes $d/2$, $d/4$, $d/8$, and so on. An important maintained invariant is that the maximum degree in the remaining graph is bounded by $\Delta$ before and after each phase.

We now discuss the main loop of the algorithm in Lines~\ref{line:peeling:main_loop}--\ref{line:peeling:reduce_graph}.
In each phase, after halving $\Delta$, the algorithm selects $H$, the set of vertices of degree at least $\Delta$ (see Line~\ref{line:peeling:heavy}). We refer to them as \emph{heavy}. Then, for each $v\in H$, it selects uniformly and independently at random a neighbor $f(v)$, which we call a \emph{friend} of $v$, and it also defines $F$ to be the set of all friends (see Lines~\ref{line:peeling:select_f}--\ref{line:peeling:friends}). What happens in Lines~\ref{line:peeling:color} and~\ref{line:peeling:match} can be replaced by any algorithm that finds a large matching between $H$ and $F$. The procedure specified here is easier to analyze for our purposes. First, it turns the graph into a bipartite graph by randomly coloring each vertex either blue or red. Then for each heavy vertex $v$, it adds $(v,f(v))$ to the matching if $v$ is blue and $f(v)$ is red, and no other blue heavy vertex $w$ claims $f(v)$ to be its friend $f(w)$. Finally, in Lines~\ref{line:peeling:vc} and~\ref{line:peeling:reduce_graph}, all heavy vertices and their friends are added to the vertex cover and removed from the remaining graph.

It is easy to verify that the constructed sets $M$ and $C$ are a proper matching and vertex cover. This algorithm produces a constant--factor approximation to both maximum matching and vertex cover with constant probability. In each phase, one can show that the expected size of the set of edges added to $M$ is $\Omega(|H|)$. Consider any heavy vertex $v$. Since the degree of all vertices in $G[U]$ is bounded by $2\Delta$ in Line~\ref{line:peeling:match}, the probability that another blue heavy vertex requests $f(v)$ to be its friend is at most a constant bounded away from 1. One can then show that this relationship holds as well for the final $M$ and $C$. Therefore, by Markov's inequality, with constant probability $M$ cannot lose too much compared to $C$. If that is the case, $C$ and $M$ are within a constant factor and are constant factor approximations, because any vertex cover has to be of size at least $|M|$ to cover edges of $M$.

An important and useful property of the algorithm is that it can be simulated in the MPC model exactly, using $O(1)$ MPC rounds to simulate each iteration of the main loop. Essentially, since sorting and prefix sums can be computed in $O(1)$ MPC rounds, it is possible to compute exact degrees vertices and then select heavy vertices, random neighbors, and isolates edges in $O(1)$ MPC rounds as well with all sets represented as lists, which may not fit onto a single machine and span several machines. (The topic of exact simulation of a version of this algorithm is also discussed by Czumaj et al.~\cite{round_compression} in the full version of their paper, which can be found on arXiv.) We state this as the following fact.

\begin{fact}\label{fact:sim_global_peel}
Let $\delta \in (0,1)$ be a fixed constant. Let $G$ be an $n$--vertex graph with $m$ edges and maximum degree at most $d > 1$. \globalAlg{} can be simulated in $O(\log d)$ MPC rounds and $O(n^\delta)$ space per machine and $O(n+m)$ total space.
\end{fact}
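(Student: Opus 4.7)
The plan is to show that each iteration of the main loop of \globalAlg{} can be simulated in $O(1)$ MPC rounds using only standard primitives that are available in this space regime, and then note that there are $O(\log d)$ iterations. Throughout the simulation, the state is maintained as a collection of lists distributed across the machines: the remaining vertex set $U$, the remaining edge set $E[U]$, and the current $M$ and $C$, each partitioned evenly across machines. Since $S \ge n^\delta$, sorting, prefix sums, semi-sort/group-by, broadcast of $O(1)$ words, and joining two lists on a common key can each be implemented in $O(1)$ MPC rounds using $O(n+m)$ total space (as shown in~\cite{GSZ}); I will reduce every substep of one loop iteration to a constant number of such primitives.

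I would implement one iteration as follows. First, to compute the degree of each $v \in U$ in $G[U]$, filter the current edge list down to edges with both endpoints in $U$ (a join with $U$), sort edges by their lower endpoint, and use a prefix sum to obtain degrees; then filter to get the heavy set $H$. To compute $f(v)$ for each $v\in H$, sort (directed copies of) the surviving incident edges by source, use a prefix sum within each group to number the neighbors $1,\ldots,\deg_U(v)$, have the machine that owns $v$ broadcast a uniformly random integer in $[1,\deg_U(v)]$ alongside $v$, and join to pick the chosen neighbor; this determines $f(v)$ and hence $F$ in $O(1)$ rounds. Random blue/red colors for $H \cup F$ are drawn locally. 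For Line~\ref{line:peeling:match}, group the pairs $(v,f(v))$ with $v \in H$ by $f(v)$ via sorting, count in each group how many $v$'s are blue (prefix sum), and admit $(v,f(v))$ to $M$ iff $v$ is blue, $f(v)$ is red, and that count equals $1$. Finally, update $C\leftarrow C \cup H \cup F$ and delete $H\cup F$ from $U$ by another pass of sorting/joining. Each sub-step uses $O(n+m)$ total space and $O(n^\delta)$ per machine.

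Correctness of the simulation is immediate, since each step reproduces the corresponding line of Algorithm~\ref{alg:peeling_central} exactly (modulo using fresh private randomness on the machines owning heavy vertices and colored vertices). The invariant that the maximum degree in $G[U]$ is at most $\Delta$ before the $\Delta$-th phase is maintained by the algorithm itself and is not needed for the simulation, only to bound the number of phases to $\lceil \log_2 d\rceil +1$. Multiplying $O(1)$ rounds per iteration by $O(\log d)$ iterations gives the claimed round complexity, and the total and per-machine space bounds follow because the state never exceeds $O(n+m)$ words and every primitive is known to respect $O(n^\delta)$ per-machine space. The main obstacle, if any, is the ``global'' step of sampling $f(v)$ uniformly from among the neighbors of $v$ across potentially many machines; I resolve it by the group-by-and-index trick above, which is also the technique used implicitly in the arXiv version of~\cite{round_compression}.
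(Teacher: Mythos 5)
Your proposal is correct and takes essentially the same approach the paper sketches: simulate each iteration of \globalAlg{}'s main loop in $O(1)$ MPC rounds via sorting, prefix sums, and group-by on distributed lists, for $O(\log d)$ rounds total. You simply spell out the implementation details (degree counting, the group-by-and-index trick for uniform neighbor selection, uniqueness check for matching) that the paper leaves implicit by citing standard MPC primitives and the Czumaj et al.\ full version.
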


\subsection{The MPC Algorithm}

In our more efficient implementation of \globalAlg{} in MPC, we do not maintain exact degrees of vertices in the remaining graph, but instead use approximation. We use sampling and due to standard concentration bounds, we know know that the degrees of heavy vertices lie within the relaxed range of, say, $[\Delta/2,4\Delta]$, instead of $[\Delta,2\Delta]$. Our MPC algorithm is presented as Algorithm~\ref{alg:mpc_match} and takes as input a graph $G$ and parameter $k$, which specifies how many iterations of the main loop of \globalAlg{} it compresses to $O(\log k)$, using \RCAlg{}, i.e., the machinery developed in Section~\ref{sec:rc}. In order to efficiently apply \RCAlg{}, it has to sparsify the input graph. This happens in Lines~\ref{line:mpc_match:empty_multi}--\ref{line:mpc_match:add_edge}, in which it produces a multi-graph $G'$ that has maximum degree bounded by $k \cdot O(2^k\log n)$ with high probability. Since the goal is to simulate $k$ iterations of the main loop of \globalAlg{}, the algorithm independently samples $k$ subgraphs, which all become part of $G'$. Additionally, the sampling density is selected so that even when we consider the $i$-th iteration, $1 \le i \le k$, with the threshold $\Delta$ decreasing to $\Delta/2^i$, we are still likely to see enough edges to detect heavy vertices (i.e., the vertices that have degree approximately at least $\Delta/2^i$). The local distributed algorithm used for processing is \AlgFewPhases{} and is presented as Algorithm~\ref{alg:loc_peel}. It is easy to see that the iterations of its main loop correspond to the iterations of the main loop in \globalAlg{} with randomly downsampled--but still sufficiently dense---graph. In order to keep \AlgFewPhases{} deterministic, yet still allow for the selection of a random neighbor as a friend of a heavy vertex, we label all sampled edges with random integers from a large range in the process of creating $G'$ in Lines~\ref{line:mpc_match:randomize} and \ref{line:mpc_match:add_edge}. \AlgFewPhases{} can then select, for each heavy vertex $v$, the existing edge with the lowest such integer and make its other endpoint $v$'s friend $f(v)$. Note that \AlgFewPhases{} is in fact a deterministic local distributed algorithm and can therefore be accelerated, using \RCAlg{}. This follows from the fact that constant number of communication rounds between neighbors suffices for each peeling iteration in order to find out which neighbors are still present in the graph, select friends and, decide which vertices are being matched.

\begin{algorithm}[ht]
  \caption{$\AlgFewPhases_{(k,\lambda)}(G)$\algdesc{A local simulation of a few peeling iterations on a downsampled graph}}\label{alg:loc_peel}
  \KwIn{\\
    \qquad$\bullet$ multi-graph $G = (V,E)$ on $n$ vertices with edge labels $(i,\SET{\rho_u,\rho_w})$ for each edge $\SET{u,w}$ \\
    \qquad$\bullet$ a parameter $k > 1$ equal to the number of iterations to simulate\\
    \qquad$\bullet$ a scaling factor $\lambda$}
  \KwOut{}

\BlankLine

  $\Delta \leftarrow 2^k \lambda \log n$, $U \leftarrow V$, $M \leftarrow \emptyset$, $C \leftarrow \emptyset$
  
  \For{$i = 1\ldots k$}{

      $\Delta \leftarrow \Delta / 2$
      
      Let $G'$ be the subgraph of $G[U]$ consisting of edges labeled $(j,\SET{\rho_u,\rho_v})$ with $j=i$
      
      Let $H\subseteq U$ be the set of vertices of degree at least $\Delta$ in $G'$
      
      \ForEach{$v\in H$}{
      
          $N_v \leftarrow \mbox{set of neighbors of $v$ in $G'$}$
     
          $f(v) \leftarrow \mbox{$w \in N_v$ minimizing $\rho_v$ in the label $(i,\SET{\rho_v,\rho_w})$ associated with edge $(v,w)$}$
          
      }
      
      $F \leftarrow \{f(v):v\in H\}$
  
      Color each vertex in $H \cup F$ either blue or red independently at random\label{line:loc_peel:color}
      
      $M \leftarrow M \cup \left\{(v,f(v)) \in H \times F : \mbox{$v$ is blue} \land \mbox{$f(v)$ is red} \land \forall_{\mbox{\scriptsize blue\ }w \in H\setminus\{v\}}\mbox{$f(w) \ne f(v)$}\right\}$\label{line:loc_peel:match}
      
      $C \leftarrow C \cup H \cup F$\label{line:loc_peel:vc}
      
      $U \leftarrow U \setminus \left(H \cup F\right)$\label{line:loc_peel:reduce_graph}

  }

  \Return{$(M,C)$}
\end{algorithm}

\begin{algorithm}[ht]
  \caption{$\MPCMatch(G,k)$\algdesc{An MPC algorithm for maximum matching and vertex cover}}\label{alg:mpc_match}
  \KwIn{\\
    \qquad$\bullet$ graph $G = (V,E)$ on $n$ vertices\\
    \qquad$\bullet$ a number $k > 1$ of phases to execute at once}
  \KwOut{a matching and vertex cover}

\BlankLine

  $U \leftarrow v$, $\Delta \leftarrow n$, $M \leftarrow \emptyset$, $C \leftarrow \emptyset$
  
  $\lambda \leftarrow \mbox{sufficiently large constant}$
  
  \While{$\Delta > \lambda^2 \cdot \log n$}{
  
      \tcc{Invariant:\ maximum degree in $G[U]$ at most $2\Delta$ with high probability}
  
      Let $G$ be an empty edge--labeled multi-graph on $V$\label{line:mpc_match:empty_multi}
  
      $k' \leftarrow \MIN{k,\lceil\log\frac{\Delta}{\lambda^2\cdot\log n}\rceil}$
  
      \For{$i=1\ldots k'$}{
  
          \ForEach{\rm edge $(u,v)$ in $G[U]$}{
              
              Let $\rho_u$ and $\rho_v$ be independent random integers in $\left[0,1000n^{\Theta(1)}\right]$\label{line:mpc_match:randomize}
              
              With probability $\frac{2^{k'} \lambda \log n}{\Delta}$, add $(u,v)$ with label $(i,\SET{\rho_u,\rho_v})$ to $G'$\label{line:mpc_match:add_edge}
              
          }
      }
      
      $(M',C') \leftarrow \RCAlg(G',\AlgFewPhases_{(k',\lambda)},O(k'))$\label{line:mpc_match:rc}
      
      $U \leftarrow U \setminus C'$, $M \leftarrow M \cup M'$, $C \leftarrow C \cup C'$
      
      $\Delta \leftarrow \Delta / 2^{k'}$
  
  }

  Simulate $(M'',C'') \leftarrow \globalAlg(G[U],2\Delta)$ directly
  
  \Return{$(M\cup M'',C \cup C'')$}
\end{algorithm}

Note that we stop the accelerated simulation in \MPCMatch{} when the threshold $\Delta$ decreases to $\Theta(\log n)$. This follows from the fact that for small $\Delta$, the sampling may not result in sufficiently strong concentration in our relatively straightforward analysis. Hence at this point, we instead just directly simulate \globalAlg{} as stated by Fact~\ref{fact:sim_global_peel}.

We now state our main theorem and prove statements missing from the above description to obtain a formal proof.

\begin{theorem}\label{thm:MM_VC}
Let $\delta \in (0,1)$ be a fixed constant, $n$ be the number of vertices in the input graph, and $k \in [2,\log n]$. There is an MPC algorithm that with constant probability computes a $O(1)$--factor approximation to both maximum matching and minimum vertex cover, runs in $O\left(\frac{\log n}{k} \cdot \log k\right)$ rounds, and uses $\MAX{O(n^\delta),(2^k\log n)^{O(k)}}$ space per machine and $O(m + n\cdot(2^k\log n)^{O(k)})$ total space.
\end{theorem}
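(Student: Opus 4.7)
}
The plan is to analyze Algorithm~\ref{alg:mpc_match} in three layers: (i) count the outer iterations and MPC rounds per iteration, (ii) verify the degree and space bounds that justify applying \RCAlg{} via Lemma~\ref{lem:rc}, and (iii) lift the approximation analysis of \globalAlg{} through the downsampled local simulation inside \AlgFewPhases{}.

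First I would bound the round complexity. The outer while loop shrinks $\Delta$ by a factor of $2^{k'}$ per iteration, starting from $\Delta = n$ and stopping once $\Delta \le \lambda^2 \log n$. Since $k' = \Theta(k)$ except possibly in the last iteration, the loop executes $O((\log n)/k)$ times. Within one iteration, constructing $G'$ takes $O(1)$ MPC rounds using the standard sorting and prefix-sum primitives, and $\RCAlg(G',\AlgFewPhases_{(k',\lambda)},O(k'))$ costs $O(\log k') = O(\log k)$ rounds by Lemma~\ref{lem:rc}. The final direct simulation of \globalAlg{} on the residual $G[U]$ costs $O(\log \log n)$ rounds by Fact~\ref{fact:sim_global_peel} and is absorbed. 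Summing gives the claimed $O((\log n / k)\cdot \log k)$.

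Second, I would establish the maximum degree $d$ of $G'$ that feeds into Lemma~\ref{lem:rc}. Conditioned on the invariant that the maximum degree in $G[U]$ is at most $2\Delta$, each of the $k'$ independent sampling rounds retains each edge with probability $p = (2^{k'}\lambda \log n)/\Delta$; note that the choice of $k'$ keeps $p \le 1/\lambda$. The expected sampled degree of any vertex in one copy is $O(2^{k'}\log n)$, and a Chernoff bound (using $2^{k'}\lambda\log n = \Omega(\log n)$) makes this a high-probability bound. Summing across the $k'$ copies yields $d = O(k \cdot 2^k \log n)$. Plugging $d$ and $t = O(k)$ into Lemma~\ref{lem:rc} gives $s_\star = d^{t} = (2^k \log n)^{O(k)}$, which matches the per-machine bound $\MAX{O(n^\delta),(2^k\log n)^{O(k)}}$ and, after adding the $O(m)$ needed to store the input, the total-space bound $O(m + n\cdot(2^k \log n)^{O(k)})$.

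Third, for correctness I would argue that $\AlgFewPhases_{(k',\lambda)}$ run on $G'$ faithfully simulates $k'$ iterations of \globalAlg{} on $G[U]$ up to a relaxed degree window such as $[\Delta/2,4\Delta]$. In the $i$-th internal phase, a vertex of degree $\ge \Delta/2^i$ in $G[U]$ has expected sampled degree $\Omega(\lambda \log n)$ in the $i$-th copy and is declared heavy w.h.p., while a vertex of much smaller true degree is not. The edge labels $(i,\SET{\rho_u,\rho_v})$ assigned in Lines~\ref{line:mpc_match:randomize}--\ref{line:mpc_match:add_edge} let the deterministic minimum-label rule in \AlgFewPhases{} reproduce uniformly random friend selection, and Lines~\ref{line:loc_peel:color}--\ref{line:loc_peel:match} coincide with the matching step of \globalAlg{}. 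A union bound over $n$ vertices, the $O(\log n/k)$ outer iterations, and the $k'$ internal phases preserves both the degree invariant and the per-phase $\Omega(|H|)$ expected matching contribution, so the approximation analysis of \globalAlg{} recalled in the paper transfers and yields a constant-factor approximation to both maximum matching and minimum vertex cover with constant probability. The main obstacle is calibrating $\lambda$, $p$, and the stopping rule $\Delta > \lambda^2 \log n$ so that these Chernoff tails remain summable across all $k'$ internal phases and all $O(\log n/k)$ outer iterations simultaneously; this is precisely what forces the clean fallback to a direct run of \globalAlg{} via Fact~\ref{fact:sim_global_peel} once $\Delta$ drops below $\Theta(\log n)$, where sampling concentration would otherwise break down.
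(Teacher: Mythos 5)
Your proposal follows essentially the same three-layer structure as the paper's proof: counting $O(\log n/k)$ outer iterations times $O(\log k)$ rounds for \RCAlg{}, bounding the sampled degree $d = O(k\cdot 2^k\log n)$ w.h.p. to feed into Lemma~\ref{lem:rc} for the space bounds, and transferring the per-phase $\Omega(|H_i|)$ expected-matching argument from \globalAlg{} through the downsampled simulation. The one step you leave implicit --- converting $\Exp[|M_i|] = \Omega(\Exp[|C_i|])$ per phase into a constant-probability constant-factor guarantee overall --- the paper makes explicit by summing over phases, using the deterministic inequality $\sum_i|C_i| \ge \sum_i|M_i|$, and applying a Markov-type argument (together with absorbing the $n^{-\Omega(1)}$ failure contributions into the expectation bound), so you would want to spell that out.
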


\begin{proof}
Note first that each peeling iteration is always conducted using an independently downsampled graph (with a different $i$ in edge labels $(i,\SET{\rho_u,\rho_v})$) and therefore, the approximate simulation of consecutive peeling iterations of \globalAlg{} is always conducted using ``fresh'' randomness, which allows for applying the Chernoff bounds combined with the union bound. Note also that if the constant $\lambda$ in Algorithm~\ref{alg:mpc_match} is sufficiently high, since all the thresholds we consider at least $(\lambda/2) \cdot \log n$, we know due to the Chernoff bound combined with the union bound that the probability that we err by a factor of more than $2$ in any of our degree estimates is bounded by $n^{-\Omega(1)}$, where the constant hidden by the big--Omega notation can be made arbitrarily large. This implies that the invariant stated in the comment in \MPCMatch{} holds with high probability throughout the execution of the algorithm. 

Also implied by the Chernoff and union bounds is the bound on the maximum degree in $G'$ created in Lines~\ref{line:mpc_match:empty_multi}--\ref{line:mpc_match:add_edge} of \MPCMatch{}, assuming that the main invariant holds. Again, with probability $1 - n^{-\Omega(1)}$, the maximum degree is bounded by $k \cdot 2 \cdot \frac{2^k \lambda \log n}{\Delta} \cdot 2\Delta = 2^{O(k)}\log n$. 
Applying Lemma~\ref{lem:rc}, this means that space per machine required for round compression in Line~\ref{line:mpc_match:rc} is $\MAX{O(n^\delta),(2^k\log n)^{O(k)}}$ and the total space is $O(m + n \cdot (2^k\log n)^{O(k)})$. All the other steps in \MPCMatch{} can routinely be executed in $O(1)$ MPC rounds, $O(n^\delta)$ space per machine and $O(nk + m)$ total space, which leads to the desired space bounds for the entire algorithm. Additionally, the total number of MPC rounds that the full execution of the algorithm uses is $O(\frac{\log n}{k} \log k) + O(\log \log n) = O(\frac{\log n}{k} \log k)$.

It remains to prove that the algorithm succeeds with constant probability at computing a constant factor approximation for both vertex cover and maximum matching. Let $C_i$ and $M_i$ be the vertices and edges added to the vertex cover and matching, respectively, in the simulation of the $i$-th iteration, where $i$ ranges from $1$ to $\log n + O(1)$. 
A given iteration may happen in either the accelerated execution of \AlgFewPhases{} or a direct simulation of \globalAlg{} at the end of \MPCMatch{}. As long as all the mentioned degree estimates are correct, which may not happen with probability at most $n^{-\Omega(1)}$ for iterations simulated using \AlgFewPhases, we claim that for each $i$, $E[|M_i|] = \Omega(E[|C_i|])$. 
To prove this, it suffices to notice that $|C_i| \le 2|H_i|$, where $H_i$ is the set of heavy vertices in the $i$-th iteration, and each given heavy vertex $v$ is matched with probability at least $\Omega(1)$ to its friend $f(v)$. First, the probability that $v$ and $f(v)$ are correctly colored---blue and red---is exactly $1/4$. Let us now bound the probability that another heavy vertex $w$ selects $f(v)$ to be its friend. For a given heavy neighbor $w$ of $f(v)$ this probability is at most $1/\Delta$, because $w$ has at least $\Delta$ neighbors in $G'$. The number of neighbors that $v$ has is then at most $4\Delta$, hence the probability of not being selected by any of them is at least $(1-1/\Delta)^{4\Delta} = \Omega(1)$, which combined with the probability of a good coloring is $\Omega(1)$ as well. By the linearity of expectation, we then get the desired property for a given $i$. Additionally, when the concentration bounds fail, which happens with miniscule probability of at most $n^{-\Omega(1)}$ with arbitrarily large constant in the exponent, we may lose at most $n/2$ in $E[|M_i|]$ and hence, in general we have $E[|M_i|] = \Omega(E[|C_i|]) - n^{-\Omega(1)}$. 

By summing over all iterations, $E[\sum_i|M_i|] = \Omega(E[\sum_i|C_i|]) - n^{-\Omega(1)}$. As long as the graph is non-empty (for empty graphs our algorithms outputs two empty sets, which are the optimal solution), this becomes $E[\sum_i|M_i|] = \Omega(E[\sum_i|C_i|])$, since significantly subconstant $n^{-\Omega(1)}$ is dwarfed by the positive size of any vertex cover. Since $\bigcup_i C_i$ and $\bigcup_i M_i$ are always a correct vertex cover and matching, respectively, $\sum_i| C_i| \ge \sum_i|M_i|$, because at least one endpoint of each edge in any matching has to be selected for a proper vertex cover. Hence with $\Omega(1)$ probability, $\sum_i|M_i| = \Omega(\sum_i|C_i|)$ (i.e., the inequality holds not only for the expectations, but also for actual sizes). Whenever this is the case $\bigcup_i M_i$ is a constant--factor approximation to maximum matching, because $\sum_i|C_i|$ is at least the maximum matching size. Similarly, $\bigcup_i M_i$ is a constant--factor approximation to vertex cover, because it is greater than some matching size by at most a constant factor.
\end{proof}

By setting $k = \sqrt{\log n/\log \log n}$, we obtain the following corollary.

\begin{corollary}[Restatement of Theorem~\ref{thm:MM_simple}]Let $\delta \in (0,1)$ be a fixed constant. There is an MPC algorithm that for an $n$--vertex graph, computes a constant--factor approximation to both maximum matching and minimum vertex cover, runs in $\tilde O(\sqrt{\log n})$ rounds, and uses $O(n^\delta)$ space per machine and $O(m) + n^{1+o(1)}$ total space.
\end{corollary}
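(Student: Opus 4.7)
The plan is to invoke Theorem~\ref{thm:MM_VC} with the specific choice $k = \sqrt{\log n / \log \log n}$ and verify that each of the three bounds in its statement reduces as claimed. Since the theorem already delivers a constant-factor approximation to both maximum matching and minimum vertex cover with constant probability, what remains is purely a calculation; the main obstacle, such as it is, is keeping track of which of the two summands inside the exponent driving the space blow-up actually dominates.

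First I would plug this $k$ into the round bound $O((\log n / k) \log k)$. We have $\log n / k = \sqrt{\log n \cdot \log\log n}$ and $\log k = O(\log\log n)$, so the round count is $O(\sqrt{\log n \cdot \log\log n} \cdot \log\log n) = \tilde O(\sqrt{\log n})$, since the residual polyloglog factor is absorbed by the $\tilde O(\cdot)$ notation.

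Next I would bound the common factor $(2^k \log n)^{O(k)}$ that appears in both the per-machine and the total space bounds of Theorem~\ref{thm:MM_VC}. Taking logarithms gives $O(k)\cdot(k+\log\log n) = O(k^2 + k\log\log n)$. For our choice, $k^2 = \log n / \log\log n$ while $k\log\log n = \sqrt{\log n \cdot \log\log n}$, and the latter is dominated by the former as soon as $\log n \gg (\log\log n)^3$. Hence the blow-up is at most $2^{O(\log n/\log\log n)} = n^{O(1/\log\log n)} = n^{o(1)}$. It follows that the per-machine space $\max\{O(n^\delta), n^{o(1)}\}$ is simply $O(n^\delta)$ for any fixed $\delta \in (0,1)$, and the total space $O(m + n\cdot n^{o(1)}) = O(m) + n^{1+o(1)}$, exactly matching the claimed bounds.
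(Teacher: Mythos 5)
Your proposal is correct and follows exactly the paper's approach: the paper likewise sets $k = \sqrt{\log n/\log\log n}$ in Theorem~\ref{thm:MM_VC} and states the corollary follows, while you have simply spelled out the arithmetic verifying the round and space bounds (correctly identifying $k^2$ as the dominant term in the exponent).
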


\subsection{Extensions}\label{sec:mm_extensions}

Our algorithm can be adapted to obtain better guarantees and applied to the weighted matching problem. We now discuss possible extensions and adaptations one by one.

\begin{description}
  \item[Succeeding with higher probability:] To make the probability of success at least $1-\eps$ for $\eps > 0$, it suffices to run $O(\log(1/\eps))$ parallel instances of the algorithm for a given problem and then select the best of the solutions.

\item[$(2+\eps)$--approximation for maximum matching and vertex cover:] In order to compute a $(2+\eps)$--approximation for maximum matching, it suffices to repeat the procedure $O(\log(1/\eps))$ times. Each time we remove the discovered matching and continue with the subgraph induced by the unmatched vertices. Every time our algorithm succeeds, it decreases the matching size in the remaining graph by a constant factor until it becomes negligibly small. When this happens, we can ignore it and still obtain an almost $2$-approximation, since each edge in our matching can ``block'' at most two edges in the optimal solution.

For vertex cover, it suffices to output all endpoints in the matching that we have discovered above and a constant--factor approximation for the reminder. This stems from the relationship between matchings and vertex cover. For any matching, the vertex cover has to be at least the matching size, so the selection of endpoints of all matched edges introduces an approximation of factor at most 2. The residual graph has relatively small maximum matching and getting a constant factor approximation in it, leads to a $(2+\eps)$--approximation overall.

\item[$(1+\eps)$--Approximation for Maximum Matching:] To obtain this approximation, one can use a technique of McGregor~\cite{McGregor05} to find long augmenting paths, using algorithms that give a $(2+\eps)$--approximation. This leads to a multiplicative $(1/\eps)^{O(1/\eps)}$ increase in the number of rounds.

\item[Maximum weight matching:] In the \emph{maximum weight matching} problem, each edge has a positive weight and the goal is to discover a matching of maximum total weight of all involved edges.
A $(2+\eps)$--approximation can be obtained using a very simple approach of Lotker, Patt-Shamir, and Pettie \cite{Lotker:2015}, which uses $O(\log(1/\eps))$ iterations of any constant--factor approximation algorithm for the weighted case. This approach requires therefore constructing an $O(1)$--approximation algorithm for the weighted case. This can be done by considering $O(\log n)$ top classes of edge weights, $(2^{-(i+1)}w,2^{-i}w]$, where $w$ is the maximum weight of an edge and $i$ ranges between $0$ and $O(\log n)$. For edges belonging to each of these weight classes, we find a constant--factor approximation $M_i$ to (unweighted) maximum matching. It is easy to prove that the sum of weights of edges in all $M_i$'s is at least a constant times the weight of the optimal maximum weight matching. Unfortunately, edges in different $M_i$'s can share endpoints and $\bigcup M_i$ may not be a matching. To address this, one can prove that at most another constant factor is lost when one greedily adds edges when possible to an initially empty matching, considering weight classes from heaviest to lightest (i.e., in order $M_0$, $M_1$, etc.). This requires at most $O(\log n)$ parallel rounds and can be seen as a distributed algorithm in the LOCAL model on a graph of maximum degree $O(\log n)$. To turn in this into an MPC algorithm that requires less than a logarithmic number of rounds, one can use our round compression technique.
\end{description}

\section{Maximal Independent Set in Bounded--Arboricity Graphs}

In this section, we show a simple algorithm for finding a maximal independent set in graphs of bounded arboricity. 

\begin{theorem}\label{thm:arbMIS}
Let $\delta \in (0,1)$ be a fixed constant and let $\alpha \in [2,n]$ be an bound on the arboricity of the input graph. Let $n$ be the number of vertices in the input graph, and let $\gamma \in [2,n]$ be such that $\gamma / \alpha = 2^{2^{\Omega(\sqrt{\log\log n})}}$. 

There is an MPC algorithm that for an $n$--vertex graphs of arboricity bounded from above by $\alpha$, computes a maximal independent set with probability $1-O\left(\frac{\log n}{n}\right)$, runs in $O\left(\frac{\log n}{\log \gamma} \cdot \log \log \alpha\gamma\right)$ rounds, and uses $\MAX{O(n^\delta),2^{O(\log^2 \alpha\gamma)}}$ space per machine and $n \cdot 2^{O(\log^2 \alpha\gamma)}$ total space.
\end{theorem}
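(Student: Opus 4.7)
The plan is to iteratively peel off low-degree vertices, building the MIS phase-by-phase on shrinking residual subgraphs. I would maintain a residual vertex set $U$ initialized to $V$ and a partial MIS initially empty; in each phase I find new MIS vertices on a carefully chosen induced subgraph of $G[U]$, add them to the global output, and then delete both them and all their neighbors from $U$. Because arboricity is monotone under taking induced subgraphs, the residual graph always has arboricity at most $\alpha$.

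Inside a single phase I would first compute exact degrees in $G[U]$ in $O(1)$ MPC rounds, then let $U_{\text{low}} \subseteq U$ be the vertices whose degree in $G[U]$ is at most $\alpha\gamma$. Since the sum of degrees in $G[U]$ is at most $2\alpha|U|$, Markov's inequality gives at most $2|U|/\gamma$ \emph{heavy} vertices, and the remaining $U_{\text{low}}$ forms a large majority of $U$. On the induced graph $G[U_{\text{low}}]$, whose maximum degree is $\alpha\gamma$, I would run Ghaffari's randomized MIS algorithm~\cite{G-MIS}; this runs in $t = O(\log(\alpha\gamma)) + 2^{O(\sqrt{\log\log n})}$ LOCAL rounds, and the assumption $\gamma/\alpha = 2^{2^{\Omega(\sqrt{\log\log n})}}$ is precisely what forces $\log(\alpha\gamma)$ to dominate so that $t = O(\log(\alpha\gamma))$. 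To fit Ghaffari's algorithm into Lemma~\ref{lem:rc}, which requires determinism, I would (per the note on randomness after that lemma) sample in advance all of the random bits used at each vertex and attach them as labels, making the algorithm a deterministic function of the labeled $t$-neighborhood. Applying Lemma~\ref{lem:rc} with $d = \alpha\gamma$ and $t = O(\log(\alpha\gamma))$ then completes the phase in $O(\log t) = O(\log\log(\alpha\gamma))$ MPC rounds and uses $d^{O(t)} = 2^{O(\log^2(\alpha\gamma))}$ space per machine and $n \cdot 2^{O(\log^2(\alpha\gamma))}$ total space.

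After each phase, every vertex of $U_{\text{low}}$ is either in the newly found MIS or adjacent to one, so removing the MIS together with all its neighbors deletes all of $U_{\text{low}}$ (and possibly some heavy vertices too); thus $|U|$ shrinks by a factor of at least $\gamma/2$ per phase, giving $O(\log n/\log \gamma)$ phases before $U$ is empty (any trivially surviving isolated vertex is added to the MIS for free). Multiplying by the $O(\log\log(\alpha\gamma))$ cost of a phase yields the claimed round bound, and union-bounding Ghaffari's $1-1/\poly(n)$ per-phase success over the $O(\log n)$ phases gives the $1 - O(\log n/n)$ overall success probability. All remaining primitives---degree counting, label sampling, extracting induced subgraphs, and deleting MIS neighborhoods---run in $O(1)$ MPC rounds within the claimed space using sorting and prefix-sum primitives available whenever $S \ge n^\delta$.

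The main subtlety I anticipate is the choice of degree cutoff: it has to be large enough that heavy vertices are rare (so that $|U|$ shrinks by a factor close to $\gamma$ per phase), yet small enough that Ghaffari's algorithm on $G[U_{\text{low}}]$ stays within the $O(\log(\alpha\gamma))$-round budget that round compression can afford in the prescribed space. Setting the cutoff to $\alpha\gamma$ (rather than $\gamma$) is exactly what balances these two constraints and ties the $\log n/\log \gamma$ outer loop count together with the $\log\log(\alpha\gamma)$ inner round compression cost to produce the stated round complexity.
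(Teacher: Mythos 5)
Your proposal is correct and follows essentially the same route as the paper: peel off the vertices of degree at most $\Theta(\alpha\gamma)$, attach pre-sampled random bits as vertex labels so Ghaffari's MIS algorithm becomes deterministic, apply Lemma~\ref{lem:rc} to run it in $O(\log\log\alpha\gamma)$ MPC rounds, delete the computed MIS and its neighborhood, and repeat for $O(\log n/\log\gamma)$ phases using the arboricity bound on average degree. The only differences are cosmetic constants (your cutoff is $\alpha\gamma$ rather than $2\alpha\gamma$) that do not affect the asymptotics.
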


\begin{proof}
We present our algorithm as Algorithm~\ref{alg:arbMIS}. It repeatedly selects a set of low--degree vertices and simulates the efficient MIS algorithm of Ghaffari~\cite{G-MIS} on it via the round compression technique introduced in Section~\ref{sec:rc}. The discovered maximal independent set is added to the solution---variable $I$ in our pseudocode---and both vertices already in $I$ and their neighbors are removed from further consideration. The correctness of the algorithm---i.e., that it returns a maximal independent set---is straightforward assuming that the algorithm of Ghaffari does not fail to compute a corresponding MIS, which may happen with probability at most $1/n$ in each iteration. As we see in the next paragraph, the number of iterations is bounded by $O(\log n)$, and therefore, the algorithm does not err with probability greater than $O\left(\frac{\log n}{n}\right)$.

\begin{algorithm}[t]
  \caption{$\arbMIS(G,\alpha,\gamma)$\algdesc{An algorithm for finding MIS in a graph of bounded arboricity}}\label{alg:arbMIS}
  \KwIn{\\
    \qquad$\bullet$ graph $G = (V,E)$ on $n$ vertices\\
    \qquad$\bullet$ an upper bound $\alpha \in [2,n]$ on the arboricity of $G$\\
    \qquad$\bullet$ a progress factor $\gamma \in [2,n]$ such that $\gamma/\alpha = 2^{2^{\Omega(\sqrt{\log\log n})}}$}
  \KwOut{maximal independent set in $G$}

\BlankLine

  $U \leftarrow V$, $I \leftarrow \emptyset$, $\Delta \leftarrow 2\alpha\gamma$

  \While{$U \ne \emptyset$\label{line:arbMIS:loop}}{
  
      Let $U' \subseteq U$ be the set of vertices of degree at most $\Delta$ in $G[U]$\label{line:arbMIS:select_low}
      
      Let $G'$ be $G[U']$ with each vertex labeled with $O(\log^2 \Delta)$ random independent bits\label{line:arbMIS:random}
      
      Run $\RCAlg\left(G',\calI,O(\log \Delta) \right)$ where $\calI$ is the MIS algorithm of Ghaffari~\cite{G-MIS} adjusted to use the extra labels as the source of randomness\label{line:arbMIS:sim_g}
      
      Add the computed MIS for $G'$ to $I$\label{line:arbMIS:addToMIS}
      
      Remove from $U$ both $U'$ and all neighbors of vertices in $I$\label{line:arbMIS:purge_vertices}

  }
  
  \Return{$I$}
\end{algorithm}

We now analyze how many iterations of the loop in Line~\ref{line:arbMIS:loop} are necessary. Let $\Delta \eqdef 2\alpha\gamma$, defined as in the algorithm. Consider any graph of arboricity at most $\alpha$. The average vertex degree is bounded from above by $2\alpha$. Given a $\gamma \ge 2$, the fraction of vertices of degree greater than $\Delta$ can at most be $1/\gamma$. Since induced subgraphs of $G$ inherit the bound $\alpha$ on arboricity, $U$ and $U'$ are such that $|U'| \ge (1-1/\gamma) |U|$ after Line~\ref{line:arbMIS:select_low}. Therefore, the size of $U$ in Line~\ref{line:arbMIS:purge_vertices} decreases by a factor of at least $\gamma$. Hence the algorithm terminates after at most $\lceil\log_\gamma n\rceil = O(\frac{\log n}{\log \gamma})$ iterations of the loop in Line~\ref{line:arbMIS:loop}.

In order to simulate Ghaffari's algorithm in Line~\ref{line:arbMIS:sim_g}, we have to bound its number of rounds, the amount of randomness it needs, and the amount of local space at each vertex. 
First, his algorithm runs in $O(\log \Delta) + 2^{O(\sqrt{\log\log n})}$ rounds, which due to our requirement on $\gamma$, becomes $O(\log \Delta)$. By analyzing his algorithm, we learn that it runs in $O(\log \Delta)$ randomized rounds in which it needs a random bit that equals $1$ with probability $2^{-i}$, where $1 \le i \le O(\log \Delta)$. This random bit, independently of $i$ in this range, can be obtained from $O(\log \Delta)$ independent random bits distributed uniformly. Further rounds in his algorithm are purely deterministic. Hence, we need at most $O(\log^2 \Delta)$ random bits per vertex, which implies that each vertex is assigned a label of length at most $O(\log \Delta)$ words in Line~\ref{line:arbMIS:random}. Finally, during the execution of his algorithm no vertex requires more than a polynomial amount of space in the size of the neighborhood it can see (with the most non-obvious part being an application of the decomposition algorithm of Panconesi and Srinivasan~\cite{PanconesiS92}), hence the quantity $s_\calA$ in Lemma~\ref{lem:rc} is our case at most $\Delta^{O(\log \Delta)}$. Overall, the value of $s_\star$, as defined in Lemma~\ref{lem:rc}, is in our case $\Delta^{O(\log\Delta)} = 2^{O(\log^2 \alpha\gamma)}$. From Lemma~\ref{lem:rc}, we obtain the following features of the simulation of Ghaffari's algorithm in Line~\ref{line:arbMIS:sim_g}:
\begin{itemize}
 \item The number of MPC rounds is $O(\log \log \Delta) = O(\log\log \alpha\gamma)$.
 \item The amount of space per machine is $\MAX{2^{O(\log^2 \alpha\gamma)},O(n^\delta)}$.
 \item The total amount of space is $n \cdot 2^{O(\log^2 \alpha\gamma)}$.
\end{itemize}
The other steps in our algorithms (in particular, Lines~\ref{line:arbMIS:select_low}, \ref{line:arbMIS:random}, \ref{line:arbMIS:addToMIS}, and \ref{line:arbMIS:purge_vertices}) are easy to implement in $O(1)$ MPC rounds with $O(n^\delta)$ space per machine and $O(m + n \log\Delta) = n \cdot 2^{O(\log^2 \alpha\gamma)}$ total space, using sorting to distribute messages, where $m \le n \alpha$ is the number of edges in the input graph.
Combining this information with the bound on the number of iterations of the main loop in the algorithm, we obtain the desired MPC space and round bounds.
\end{proof}

By setting $\gamma = 2^{\sqrt{\log n / \log\log n}}$ in Theorem~\ref{thm:arbMIS}, we obtain the following corollary.

\begin{corollary}[Restatement of Theorem~\ref{thm:arbMIS_simple}]
Let $\calG$ be a class of graphs of arboricity $O(1)$ and let $\delta \in (0,1)$ be a fixed constant. There is an MPC algorithm that for an $n$--vertex graph from $\calG$, computes a maximal independent set with probability $1 - O\left(\frac{\log n}{n}\right)$, runs in $\tO(\sqrt{\log n})$ rounds, and uses $O(n^\delta)$ space per machine and $n^{1+o(1)}$ total space.
\end{corollary}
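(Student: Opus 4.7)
The plan is to derive this corollary as an immediate consequence of Theorem~\ref{thm:arbMIS} by substituting $\gamma = 2^{\sqrt{\log n / \log\log n}}$ and using $\alpha = O(1)$. Three things need to be verified: the hypothesis on $\gamma/\alpha$, the stated round bound, and the stated space bounds. The correctness and failure probability $1-O(\log n/n)$ are inherited verbatim from Theorem~\ref{thm:arbMIS}, so no new analysis is needed there.

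First I would verify $\gamma/\alpha = 2^{2^{\Omega(\sqrt{\log\log n})}}$. Since $\alpha = O(1)$, it suffices to show that $\log\gamma = \sqrt{\log n / \log\log n}$ is at least $2^{c\sqrt{\log\log n}}$ for some constant $c>0$ and $n$ large enough. Writing $L = \log\log n$, taking a second logarithm turns this into the inequality $(L - \log L)/2 \ge c\sqrt{L}$, which is immediate for large $L$. Hence the hypothesis of Theorem~\ref{thm:arbMIS} is satisfied.

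Next I would count rounds. Theorem~\ref{thm:arbMIS} gives $O\bigl(\tfrac{\log n}{\log\gamma}\cdot \log\log(\alpha\gamma)\bigr)$ rounds. With our choice, $\log\gamma = \sqrt{\log n/\log\log n}$, so $\log n/\log\gamma = \sqrt{\log n \cdot \log\log n}$. Since $\alpha = O(1)$, we also have $\log\log(\alpha\gamma) = O(\log\log n)$. Multiplying the two factors yields $O(\sqrt{\log n}\cdot (\log\log n)^{3/2})$, which is $\tO(\sqrt{\log n})$.

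Finally I would check the space bounds. The relevant exponent is $\log^2(\alpha\gamma) = O(\log^2\gamma) = O(\log n/\log\log n)$, so $2^{O(\log^2\alpha\gamma)} = n^{O(1/\log\log n)} = n^{o(1)}$. This collapses the per-machine bound $\MAX{O(n^\delta),\,2^{O(\log^2\alpha\gamma)}}$ to $O(n^\delta)$ for any fixed $\delta \in (0,1)$, and the total-space bound to $n \cdot n^{o(1)} = n^{1+o(1)}$, matching the corollary. There is no serious obstacle; the only mildly delicate step is the double-logarithm comparison for the $\gamma/\alpha$ hypothesis, and even that is routine.
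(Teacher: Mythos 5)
Your proposal is correct and takes exactly the same approach as the paper: the paper's own justification is a single sentence, ``By setting $\gamma = 2^{\sqrt{\log n / \log\log n}}$ in Theorem~\ref{thm:arbMIS}, we obtain the following corollary,'' and your write-up simply spells out the routine verifications (the $\gamma/\alpha$ hypothesis via the double-logarithm comparison, the round count $O(\sqrt{\log n}\,(\log\log n)^{3/2}) = \tO(\sqrt{\log n})$, and the collapse of $2^{O(\log^2 \alpha\gamma)}$ to $n^{o(1)}$) that the paper leaves implicit.
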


\bibliographystyle{alpha}
\bibliography{bibliography}

\end{document}